\documentclass[a4paper]{article}

\usepackage[latin1]{inputenc} %
\usepackage[T1]{fontenc} %
\usepackage{RR}
\usepackage{hyperref}
\usepackage{latexsym}
\usepackage{amssymb,amsmath}
\usepackage{amsthm}
\usepackage{graphicx}
\usepackage{color}

\newcommand{\R}{\mathbb{R}}

\newtheorem{theorem}{Theorem}
\newtheorem{lemma}[theorem]{Lemma}
\newtheorem{cor}[theorem]{Corollary}

\newcommand{\red}[1]{\textcolor{red}{#1}}

\RRdate{December 2009}



\RRauthor{
Otfried Cheong\thanks{Theory of Computation Lab, Dept. of Computer Science, KAIST, Korea. \texttt{otfried@kaist.edu}} \and
Xavier Goaoc\thanks{Loria, INRIA Nancy Grand Est, France. \texttt{goaoc@loria.fr}} \and
Cyril Nicaud\thanks{LIGM Universit\'e Paris Est, France.  \texttt{nicaud@univ-mlv.fr}}
}
\authorhead{Cheong, Goaoc \& Nicaud}

\RRtitle{Set Systems and Families of Permutations with Small Traces}
\RRetitle{Set Systems and Families of Permutations with Small Traces}
\titlehead{Set Systems and Families of Permutations with Small Traces}

\RRnote{Otfried Cheong was supported by the Korea Science and Engineering Foundation Grant R01-2008-000-11607-0 funded by the Korea government. The collaboration between Otfried Cheong and Xavier Goaoc was supported by the INRIA \emph{\'Equipe associ\'ee} KI.}

\RRabstract{We study the maximum size of a set system on $n$ elements
 whose trace on any $b$ elements has size at most $k$. We show that if
 for some $b \ge i \ge 0$ the shatter function $f_R$ of a set system
 $([n],R)$ satisfies $f_R(b) < 2^i(b-i+1)$ then $|R| = O(n^i)$; this
 generalizes Sauer's Lemma on the size of set systems with bounded
 VC-dimension. We use this bound to delineate the main growth rates
 for the same problem on families of permutations, where the trace
 corresponds to the inclusion for permutations. This is related to a
 question of Raz on families of permutations with bounded VC-dimension
 that generalizes the Stanley-Wilf conjecture on permutations with
 excluded patterns.}
\RRkeyword{Set systems, VC dimension, Sauer Lemma, Permutation pattern}

\RRresume{Nous \'etudions la taille maximale d'un hypergraphe \`a $n$
 sommets dont la trace sur toute sous-famille de $b$ sommets est de
 taille au plus $k$. Nous montrons que pour tous entiers $b \ge i \ge
 0$, si la fonction de pulv\'erisation $f_R$ d'un hypergraphe
 $([n],R)$ satisfait $f_R(b) < 2^i(b-i+1)$ alors $|R| = O(n^i)$; cela
 g\'en\'eralise le Lemme de Sauer sur la taille des hypergraphes de
 dimension de Vapnik-Chervonenkis born\'ee. Nous utilisons ensuite
 cette borne pour s\'eparer les principaux r\'egimes de croissance
 pour une question analogue sur les familles de permutations, o\`u
 l'op\'eration de trace correspnd \`a l'inclusion de motifs. Cela est
 reli\'e \`a une question de Raz sur les familles de permutations \`a
 dimension de Vapnik-Chervonenkis born\'ee qui g\'en\'eralise la
 conjecture de Stanley-Wilf sur les permutations \`a motifs exclus.}
\RRmotcle{Hypergraphes, Dimension de Vapnik-Chervonenkis, Lemme de Sauer, Motifs de permutation}

\RRprojet{V\'egas}
\RRdomaine{2}
\RRtheme{Algorithmique, calcul certifié et cryptographie}

\URLorraine
\RCNancy

\begin{document}
\RRNo{7154}

\makeRR



\section{Introduction}

 In this paper, we study two problems of the following flavor: how
 large can a family of combinatorial objects defined on
 $[n]=\{1,\ldots, n\}$ be if its number of distinct ``projections'' on
 any small subset is bounded? We consider set systems, where the
 ``projection'' is the standard notion of trace, and families of
 permutations, where the ``projection'' corresponds to the notion of
 inclusion used in the study of permutations with excluded patterns.

\paragraph{\bf Set systems.} A \emph{set system}, also called a
 \emph{range space} or a \emph{hypergraph}, is a pair $(G,R)$ where
 $G$ is a set, the \emph{ground set}, and $R$ is a set of subsets of
 $G$, the \emph{ranges}. Since we will only consider finite set
 systems, our ground set will always be $[n]$. Given $X \subset [n]$,
 the \emph{trace} of $R$ on $X$, denoted $R_{|X}$, is the set $\{ A
 \cap X \mid A \in R\}$. Given an integer $b$, let $\binom{R}{b}$ denote
 the set of $b$-tuples of $R$, and define:
\[ f_R(b) = \max_{X \in \binom{[n]}{b}} |R_{|X}|.\]
 The function $f_R$ is called the \emph{shatter function} of
 $([n],R)$, and counts the size of the largest trace on a subset of
 $[n]$ of size $b$. The first problem we consider is the following:

\medskip

\begin{quote}
\noindent
\textbf{Question 1.} Given $b$ and $k$, how large can a set system $([n],R)$ be if $f_R(b)\le k$?
\end{quote}

\medskip

\noindent
 For $k=2^b-1$, the answer is given by Sauer's Lemma~\cite{Sauer71} (also
 proven independently by Perles and Shelah~\cite{Shelah72} and Vapnik
 and Chervonenkis~\cite{VC71}), which states that:
\begin{equation}\label{eq:sauer} |R| \le \sum_{i=0}^{b-1} \binom{n}{i} = O(n^{b-1}).\end{equation}
 The largest $b$ such that $f_R(b)=2^b$ is known as the
 \emph{VC-dimension} of $([n],R)$. The theory of set systems of
 bounded VC-dimension, and in particular Sauer's Lemma, has many
 applications, in particular in geometry and approximation algorithms;
 classical examples include the epsilon-net Theorem~\cite{Discrepancybook} or
 improved approximation algorithms for geometric set cover~\cite{BronnimannG-95}.

\medskip

 For the case of graphs, that is, set systems where all ranges have
 size $2$, Question~1 is a classical problem known as a
 \emph{Dirac-type problem}: what is the maximum number $Ex(n,m,k)$ of
 edges in a graph on $n$ vertices whose induced subgraph on any $m$
 vertices has at most $k$ edges? These problems were extensively
 studied in extremal graph theory since the 1960's, and we refer to
 the survey of Griggs et al.~\cite{gst-ext98} for an overview. In the
 case of general set systems, the only results we are aware of are due
 to Frankl~\cite{Frankl83} and Bollob\'as and
 Radcliffe~\cite{DefectSauer}. Specifically, Frankl proved that
\[ f_R(3) \le 6 \Rightarrow |R| \le t_2(n) +n +1
 \quad \hbox{and} \quad f_R(4) \le 10 \Rightarrow |R| \le t_3(n) +n+1,
\]
 where $t_i(n)$ denotes the number of edges of the Tur\'an graph
 $T_i(n)$. Bollob\'as and Radcliffe showed that:
\[ f_R(4) \le 11 \Rightarrow |R| \le \binom{n}2+n+1 \quad \hbox{except for $n=6$.}\]
 There has also been interest in the case where $b=\alpha n$ and
 $b=n-\Theta(1)$; we refer to the article of Bollob\'as and
 Radcliffe~\cite{DefectSauer} for an overview of these results.

\paragraph{\bf Permutations.} The notion of VC-dimension was extended to
 sets of permutations by Raz~\cite{Raz00vc-dimensionof} as
 follows. Let $\sigma$ be a permutation on $[n]$ and $X$ some subset
 of $[n]$. The \emph{restriction} of $\sigma$ to $X$ is the
 permutation $\sigma_{|X}$ of $X$ such that for any $u,v \in X$,
 $\sigma_{|X}^{-1}(u)<\sigma_{|X}^{-1}(v)$ whenever $\sigma^{-1}(u) <
 \sigma^{-1}(v)$; if we consider a permutation as an ordering,
 $\sigma_{|X}$ is simply the order induced on $X$ by $\sigma$. This
 allows to define the \emph{shatter function} of a set $F$ of permutations
 similarly:
\[ \phi_F(m) = \max_{X \in \binom{[n]}{m}} |F_{|X}|.\]
 The VC-dimension of $F$ is then the largest $m$ such that $\phi_F(m)
 = m!$, and the analogue of Question~1 arises naturally for sets of
 permutations:

\medskip

\begin{quote}
\noindent
\textbf{Question 2.} Given $m$ and $k$, how large can a set $F$ of
 permutations on $[n]$ be if $\phi_F(m)\le k$?
\end{quote}

\medskip

\noindent
 Raz~\cite{Raz00vc-dimensionof} showed that any family of permutations
 on $[n]$ such that $\phi_F(3)<6$ has size at most exponential in $n$,
 and asked whether the same holds whenever $k<m!$.

\medskip

 This problem is related to classical questions on families of
 permutations with \emph{excluded pattern}.  A permutation $\sigma$ on
 $[n]$ \emph{contains} a permutation $\tau$ on $[m]$ if there exists
 $a_1 < a_2 < \ldots < a_m$ in $[n]$ such that $\sigma^{-1}(a_i) <
 \sigma^{-1}(a_j)$ whenever $\tau^{-1}(i)<\tau^{-1}(j)$. If no
 permutation in a family $F$ contains $\tau$ then $F$ \emph{avoids}
 $\tau$ and $\tau$ is an \emph{excluded pattern} for $F$. The study of
 families of permutations with excluded patterns goes back to a work
 of Knuth~\cite{AOCP1}, motivated by sorting permutations using
 queues, and received considerable attention over the last decades. In
 particular, Stanley and Wilf asked whether for any fixed permutation
 $\tau$ the number of permutations on $[n]$ that avoid $\tau$ is at
 most exponential in $n$, a question answered in the positive by
 Marcus and Tardos~\cite{MarTar:SW(2004)}. If a family of permutations
 has VC-dimension at most $m-1$ then for any $m$-tuple $X \subset [n]$
 there is a permutation $\sigma(X)$ on $[m]$ which is forbidden for
 restrictions \emph{to $X$}. In that sense, Raz's question generalizes
 that of Stanley and Wilf.

\paragraph{\bf Our results.} In this paper, we generalize Sauer's Lemma,
 and show that for any range space $([n],R)$, if $f_R(b)<2^i(b-i+1)$
 for some $b > i \ge 0$ then $|R| = O(n^i)$
 (Theorem~\ref{thm:ranges-upper}). We then prove that the condition
 $f_R(b) = k$ is in fact \emph{equivalent} to a Dirac-type problem on
 graphs for $k \le 8+3\lfloor \frac{b-3}2 \rfloor + s(b)$, where
 $s(b)=1$ when $b$ is even and $0$ otherwise
 (Lemma~\ref{lem:red-to-dirac}). It follows that some conditions
 $f_R(b)=k$ lead to growth rates with fractional exponents
 (Corollary~\ref{cor:48}), a behavior not captured by
 Theorem~\ref{thm:ranges-upper}. Finally, we give a reduction of the
 permutation problem to the set system problem
 (Lemma~\ref{lem:correspondance}) from which we deduce the main
 transitions between the constant, polynomial and at least exponential
 behaviors for Question~2.

\section{Set systems}

 In this section we give bounds on the size of a set $R$ of ranges on
 $[n]$ with a given $f_R(b)$. Recall that a set system $([n],R)$ is
 \emph{ideal}, also called \emph{monotone decreasing}\footnote{An
 ideal set system is also an \emph{abstract simplicial complex} to
 which the empty set was added.}, if for any $B \subset A \in R$ we
 have $B \in R$. The next lemma was proven, independently, by
 Alon~\cite{Alon83} and Frankl~\cite{Frankl83}.

\begin{lemma}\label{lem:normalization}
 For any set system $([n],R)$ there exists an ideal set system
 $([n],\tilde{R})$ such that $|R| = |\tilde{R}|$ and for any integer
 $b$ we have $f_{\tilde{R}}(b) \le f_R(b)$.
\end{lemma}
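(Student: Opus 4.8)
The plan is to use down-compression (``shifting''). For a fixed $x \in [n]$ and a family $R$ of subsets of $[n]$, I define the operator $S_x(A) = A \setminus \{x\}$ when $x \in A$ and $A \setminus \{x\} \notin R$, and $S_x(A) = A$ otherwise, and set $S_x(R) = \{S_x(A) : A \in R\}$. First I would establish three routine properties: (i) $S_x$ is injective on $R$, hence $|S_x(R)| = |R|$ --- a short case distinction according to whether each of two sets with equal image is compressed or not; (ii) the quantity $\Phi(R) = \sum_{A \in R} |A|$ strictly drops whenever $S_x(R) \neq R$, because some set loses the element $x$, no set grows, and $S_x$ is injective; and (iii) if $S_x(R) = R$ for every $x \in [n]$ then $R$ is ideal, since $S_x(R) = R$ forces $A \setminus \{x\} \in R$ for every $A \in R$ with $x \in A$, and iterating over the elements of $A$ puts every subset of $A$ into $R$. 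Starting from $R$ and repeatedly applying $S_x$ for some $x$ with $S_x(R) \neq R$, property (ii) guarantees termination, (iii) says the terminal family $\tilde{R}$ is ideal, and (i) gives $|\tilde{R}| = |R|$.

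What remains, and what I expect to be the only real difficulty, is to check that compression never increases the shatter function, that is, $f_{S_x(R)}(b) \le f_R(b)$ for every $x$ and every $b$; the inequality $f_{\tilde{R}}(b) \le f_R(b)$ then follows by composing the compression steps. Fix $X \in \binom{[n]}{b}$. If $x \notin X$ then $S_x(A) \cap X = A \cap X$ for all $A$, so $(S_x R)_{|X} = R_{|X}$ and we are done. The interesting case is $x \in X$, where I would compare $(S_x R)_{|X}$ with $S_x(R_{|X})$ --- the compression at $x$ of the trace family $R_{|X}$, now viewed as a family of subsets of $X$. The heart of the argument is the inclusion $(S_x R)_{|X} \subseteq S_x(R_{|X})$: assuming it, property (i) applied to $R_{|X}$ yields $|(S_x R)_{|X}| \le |S_x(R_{|X})| = |R_{|X}| \le f_R(b)$, and taking the maximum over $X$ finishes the proof.

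To prove the inclusion, I would take an arbitrary $Y \in (S_x R)_{|X}$, write $Y = S_x(A) \cap X$ with $A \in R$, and set $B = A \cap X \in R_{|X}$. A case analysis on whether $x \in A$ and whether $A$ is compressed shows that $Y$ is either $B$ or $B \setminus \{x\}$; in the first sub-case one checks $Y = S_x(B)$ directly, and in the sub-case $Y = B \setminus \{x\}$ one observes that either $B \setminus \{x\} \notin R_{|X}$, so again $Y = S_x(B)$, or $B \setminus \{x\} \in R_{|X}$, so $Y$ is an element of $R_{|X}$ that is fixed by $S_x$; either way $Y \in S_x(R_{|X})$. This case analysis is the one place where some care is needed --- in particular the observation that membership of $B \setminus \{x\}$ in $R_{|X}$ is exactly what rescues the argument when $A$ is compressed --- but all of it is elementary.
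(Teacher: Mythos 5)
Your proposal is correct and follows the same approach as the paper: both use the down-compression (shifting) operator. Your $S_x$ produces exactly the same family as the paper's $\tilde T_x$, and your argument for $f_{S_x(R)}(b)\le f_R(b)$ via the inclusion $(S_xR)_{|X}\subseteq S_x(R_{|X})$ and the injectivity of $S_x$ on $R_{|X}$ is precisely the ``standard'' argument that the paper does not spell out but delegates to Bollob\'as and to F\"uredi--Pach. The one small structural difference: the paper applies the $n$ compressions $\tilde T_n,\ldots,\tilde T_1$ exactly once and asserts the result is ideal, which needs a sub-lemma (applying $\tilde T_{x}$ preserves down-closedness with respect to all previously processed elements); you instead iterate until a fixed point and use the potential $\Phi(R)=\sum_{A\in R}|A|$ for termination, which is self-contained and avoids that sub-lemma at the cost of not controlling the number of steps. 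Both routes are valid, and your write-up is complete where the paper merely cites references.
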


\noindent
 This can be shown by defining, for any $x\in[n]$, the operator (also called a \emph{push-down} or a \emph{compression})
\[ \tilde{T}_x(R) = \{ A\setminus\{x\} \mid A \in R\} \cup \{ A \mid A \in R \hbox{ such that } x \in A \hbox{ and } A\setminus\{x\} \in R\},\]
 that removes $x$ from any range in $R$ where that does not decrease
 the total number of sets. Then,
\[ \tilde{R} = \tilde{T}_1\left(\tilde{T}_2\left(\ldots (\tilde{T}_n\left(R\right)\right) \ldots \right)\]
 is one such ideal set. We refer to Bollob\'as~\cite[Chapter
 17]{bollobas86} and the survey of F\"uredi and
 Pach~\cite{fp-survey-91} for more details. An immediate consequence
 of Lemma~\ref{lem:normalization} is that we can work with ideal set
 systems when studying our first question.

\subsection{Sauer's Lemma for small traces}

 Define $\binom{n}{-1} = 0$ and consider the sequence $\upsilon_i(b) =
 2^i (b-i+1)$ that interpolates between $b+1=\upsilon_0(b)$ and $2^b =
 \upsilon_{b-1}(b)$. Our first result is the following generalization of
 Sauer's Lemma.

\begin{theorem}\label{thm:ranges-upper}
 Let $b > i \ge 0$ be two integers. Any range space $([n],R)$ with
 $f_R(b) < \upsilon_i(b)$ has size $|R| = f_R(n) < \sum_{j=0}^{i} (b-j+1)\binom{n}{j}$
\end{theorem}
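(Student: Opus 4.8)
The plan is to mimic the standard proof of Sauer's Lemma by element removal, but to nest an induction on $i$ inside it and to keep track of the thresholds $\upsilon_i(b)$. By Lemma~\ref{lem:normalization} we may assume $R$ is ideal, and we may assume $n \ge b$. Fix $x \in [n]$, write $[n]' = [n]\setminus\{x\}$, and decompose $R$ via $R' = \{A \in R : x \notin A\}$ and $R'' = \{B \subseteq [n]' : B \cup \{x\} \in R\}$, both ideal set systems on $[n]'$ with $R'' \subseteq R'$ and $|R| = |R'| + |R''|$. Two observations do all the work. Since $R$ is ideal, $R_{|X} = R'_{|X}$ for every $X \subseteq [n]'$, hence $f_{R'}(b) \le f_R(b)$. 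And for every $(b-1)$-set $X \subseteq [n]'$, splitting the members of $R$ according to whether they contain $x$ yields the identity $|R_{|X \cup \{x\}}| = |R'_{|X}| + |R''_{|X}|$; since $R''_{|X} \subseteq R'_{|X}$, this gives the inequality $|R_{|X\cup\{x\}}| \ge 2\,|R''_{|X}|$ that I will exploit.

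The outer induction is on $i$. For $i=0$ the hypothesis is $f_R(b) \le b$, and I would argue $|R| \le b < \upsilon_0(b)$ directly: if $S$ denotes the union of all members of $R$, then idealness puts $\emptyset$ and all singletons of $S$ into $R$, so $|S| + 1 \le |R|$; if $|S| \le b-1$ then any $b$-set $X \supseteq S$ has $R_{|X} = R$, while if $|S| \ge b$ then any $b$-subset $X$ of $S$ has $|R_{|X}| \ge |X|+1 = b+1$ --- in either case $|R| \ge b+1$ contradicts $f_R(b) \le b$. For $i \ge 1$, the inner induction is on $n$, with base case $n = b$, where $f_R(b) = |R|$ and the chain $|R| < \upsilon_i(b) = 2^i(b-i+1) \le (b-i+1)\sum_{j=0}^{i}\binom{b}{j} \le \sum_{j=0}^{i}(b-j+1)\binom{b}{j}$ finishes it (using $\sum_{j=0}^{i}\binom{b}{j} \ge 2^i$ since $b\ge i$).

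For the inner step, assume $n > b$ and that the theorem holds for all smaller ground sets and for $i-1$. Applying it to $R'$ on $[n]'$, which has $f_{R'}(b) \le f_R(b) < \upsilon_i(b)$, gives $|R'| < \sum_{j=0}^{i}(b-j+1)\binom{n-1}{j}$. The crux is the bound on $R''$: picking a $(b-1)$-set $X_0 \subseteq [n]'$ with $|R''_{|X_0}| = f_{R''}(b-1)$, the inequality above gives $2\,f_{R''}(b-1) \le |R_{|X_0 \cup \{x\}}| \le f_R(b) \le \upsilon_i(b) - 1$, and since the sequence satisfies $\upsilon_i(b) = 2\,\upsilon_{i-1}(b-1)$ this forces $f_{R''}(b-1) \le \upsilon_{i-1}(b-1) - 1 < \upsilon_{i-1}(b-1)$. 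The $(i-1)$ case applied to $R''$ on $[n]'$ (with parameter $b-1$) then gives $|R''| < \sum_{j=0}^{i-1}(b-j)\binom{n-1}{j}$, and adding the two estimates and recombining with Pascal's rule, $\sum_{j=0}^{i}(b-j+1)\binom{n-1}{j} + \sum_{j=0}^{i-1}(b-j)\binom{n-1}{j} = \sum_{j=0}^{i}(b-j+1)\binom{n}{j}$, completes the induction.

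I expect the only real difficulty to be the middle step of the last paragraph --- extracting the correct threshold $\upsilon_{i-1}(b-1)$ for $R''$. It works precisely because the identity $\upsilon_i(b) = 2\,\upsilon_{i-1}(b-1)$ matches the factor $2$ in $|R_{|X_0\cup\{x\}}| \ge 2\,|R''_{|X_0}|$; this is what singles out the sequence $\upsilon_i$, and, through the Pascal recombination, the coefficients $b-j+1$ in the conclusion. Everything else is the familiar bookkeeping of the Sauer argument, with the two base cases ($i=0$ and $n=b$) and the reduction $n \mapsto n-1$ being routine once these identities are in place.
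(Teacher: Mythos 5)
Your proof is correct and takes essentially the same route as the paper's: both decompose $R$ by removing one element into a ``trace'' part and a ``doubled'' part, both exploit the identity $\upsilon_i(b) = 2\upsilon_{i-1}(b-1)$ against the halving bound on the doubled part, and both close the induction with Pascal's rule. The only notable variation is in the base case $i=0$: the paper invokes Bondy's theorem directly (which makes the ideal reduction optional throughout, since the trace-based decomposition already gives $|R|=|R'|+|D|$ for arbitrary $R$), whereas you first normalize to an ideal system and then argue via the union $S$ of all ranges --- both are fine, and your bookkeeping of the double induction on $(i,n)$ is, if anything, stated a bit more carefully than the paper's ``fix $b$'' phrasing, which tacitly varies $b$ when passing to $D$ with parameter $b-1$.
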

\begin{proof}
 By Bondy's Theorem~\cite{bollobas86}, for any $b+1$ distinct ranges
 there exist $b$ elements on which they have distinct trace. It
 follows that if $f_R(b)<b+1$ we also have $f_R(n)<b+1$ for any $n$,
 and the statement holds for $i=0$. Also, from
\[ \sum_{j=0}^{i} (b-j+1)\binom{b}{j} \ge (b-i+1)\sum_{j=0}^{i} \binom{b}{j} \ge (b-i+1)2^i = \upsilon_i(b),\]
 we have that the statement holds for $n=b$ and any $i$.

\medskip

 Now, we fix $b$ and assume that we have
\[ f_R(b) < \upsilon_k(b) \quad \Rightarrow \quad f_R(t) < \sum_{j=0}^{k} (b-j+1)\binom{t}{j}\]
 whenever $k<i$ or $k=i$ and $t<n$. Let $R' = R_{|[n-1]}$
 denote the trace of $R$ on $[n-1]$ and let $D$ denote the ranges in
 $R'$ that are the trace of two distinct ranges from $R$. Notice that:
\begin{equation}\label{eq:Rp}
 |R| = |R'| + |D| \quad \hbox{and} \quad f_{R'}(b)<\upsilon_i(b).
\end{equation}
 Since $D \subset R'$, we have that $|D_{|X}| \le |R'_{|X}|$ and thus
 $|D_{|X}| \le \frac12 |R_{|(X \cup \{n\})}|$. It follows that
 $f_D(b-1) \le \left\lfloor\frac{f_R(b)}2\right\rfloor$. Now, from
 $\upsilon_i(b) = 2\upsilon_{i-1}(b-1)$ we get that:
\begin{equation}\label{eq:D}
 f_{D}(b-1)<\upsilon_{i-1}(b-1).
\end{equation}
 From Equations~(\ref{eq:Rp}) and~(\ref{eq:D}) and the induction hypothesis we obtain:
\[ |R| < \sum_{j=0}^{i} (b-j+1)\binom{n-1}{j} + \sum_{j=0}^{i-1} (b-1-j+1)\binom{n-1}{j}.\]
 This rewrites as
\[ |R| < b+1+\sum_{j=1}^{i} (b-j+1)\binom{n-1}{j} + \sum_{j=1}^{i} (b-j+1)\binom{n-1}{j-1}\]
 and with $\binom{n-1}{j} + \binom{n-1}{j-1} =  \binom{n}{j}$ we get
\[ |R| < b+1+\sum_{j=1}^{i} (b-j+1)\binom{n}{j} = \sum_{j=0}^{i} (b-j+1)\binom{n}{j},\]
 and thus:
\[ f_R(b) < \upsilon_i(b) \quad \Rightarrow \quad f_R(n) < \sum_{j=0}^{i} (b-j+1)\binom{n}{j}.\]
 The statement follows by induction.
\end{proof}

\bigskip
 Now, consider the following family of lower bounds. For $i=1,
 \ldots, b$ let
\[ \lambda_i(b) = \max_{b=b_1+ \ldots b_i}\prod_{j=1}^i (b_j+1)\]
 and consider the system $([n],R)$ where $R$ is obtained by splitting
 $[n]$ into $i$ roughly equal subsets and picking all $i$-tuples
 containing one element from each subset. Notice that $|R| =
 \Omega(n^i)$ and that $f_R(b) \le \lambda_i(b)$. The same holds for
 $i=0$ with $\lambda_0(b)=1$.  Thus, for any $k$ such that
 $\lambda_i(b) \le k < \upsilon_i(b)$, the maximum size of a set
 system $([n],R)$ with $f_R(b)=k$ is $\Theta(n^i)$.

\begin{table}[!htb]
{\small
\hspace{-1.7cm}
\begin{tabular}[7]{|c||c|c||c|c||c|c||c|c||c|c|}
\hline
$b$ & $\upsilon_0(b)$-1 & $\lambda_1(b)$ & $\upsilon_1(b)$-1 & $\lambda_2(b)$ & $\upsilon_2(b)-1$ & $\lambda_3(b)$ & $\upsilon_3(b)-1$ & $\lambda_4(b)$ & $\upsilon_4(b)-1$ & $\lambda_5(b)$ \\
$|R|$ & $O(1)$ & $\Omega(n)$ & $O(n)$ & $\Omega(n^2)$ & $O(n^2)$ & $\Omega(n^3)$ & $O(n^3)$ & $\Omega(n^4)$ & $O(n^4)$ & $\Omega(n^5)$\\
\hline
\hline
$2$ & $2$ & $3$ & &&&&&&& \\
\hline
$3$ & $3$ & $4$ & $5$ & $6$ & &&&&&  \\
\hline
$4$ & $4$ & $5$ & \red{$7$} & \red{$9$} & $11$ & $12$ & &&& \\
\hline
$5$ & $5$ & $6$ & \red{$9$} & \red{$12$} & \red{$15$} & \red{$18$} & $23$ & $24$ &&\\
\hline
$6$ & $6$ & $7$ & \red{$11$} & \red{$16$} & \red{$19$} & \red{$27$} & \red{$31$} & \red{$36$} & $47$ & $48$ \\
\hline
\end{tabular}
\caption{The values $v_i(b)$ and $\lambda_i(b)$ for small $b$. Gaps appear in red.}
}
\end{table}

 In particular, the order of magnitude given by
 Theorem~\ref{thm:ranges-upper} is tight for all $b\le 4$, with the
 exception of set systems with $f_R(4)=8$.

\paragraph{\bf Remark.} Observe that the condition that $f_R(b) <
 \upsilon_i(b)$ does not imply that $R$ has VC-dimension at most
 $i$. A simple example is given by
\[ R = \{A \mid A \subset [i]\} \cup \{\{x\} \mid x \in [n]\},\]
 which has VC-dimension $i$ and for which $f_R(b) = 2^i+b-i-1$ is
 smaller than $\upsilon_{i-1}(b)=2^{i-1}(b-i)$ for $b$ large enough.

\subsection{Equivalence with Dirac-type problems}

 Recall that $Ex(n,m,k)$ denotes the maximum number of edges in a
 graph on $n$ vertices whose induced subgraph on any $m$ vertices has
 at most $k$ edges.  Let $\zeta(b) = 8+3\lfloor \frac{b-3}2 \rfloor
 + s(b)$ where $s(b)=1$ if $b$ is even and $0$ otherwise.

\begin{lemma}\label{lem:red-to-dirac}
 For any $b \ge 3$, the maximal size of a set system $([n],R)$ with
 $f_R(b) = \zeta(b)-1$ is $Ex(n,b,\zeta(b)-b-2)+n+1$.
\end{lemma}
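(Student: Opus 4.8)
The plan is to pass to ideal set systems via Lemma~\ref{lem:normalization}: it then suffices to bound $|R|$ for every ideal $R$ with $f_R(b)\le\zeta(b)-1$ and to exhibit one such $R$ of the claimed size with $f_R(b)=\zeta(b)-1$. For the lower bound, choose a graph $G$ on $[n]$ with $Ex(n,b,\zeta(b)-b-2)$ edges admitting a $b$-set that spans exactly $\zeta(b)-b-2$ edges; such a $G$ exists, for otherwise an extremal graph would also be admissible for the budget $\zeta(b)-b-3$, whereas adding an edge to an extremal graph for that budget remains admissible for $\zeta(b)-b-2$, contradicting extremality. Then $R=\{A\subseteq[n]:|A|\le1\}\cup E(G)$ is ideal, has size $n+1+Ex(n,b,\zeta(b)-b-2)$, and satisfies $R_{|X}=\{\emptyset\}\cup\{\{x\}:x\in X\}\cup E(G[X])$ for every $b$-set $X$, whence $f_R(b)=1+b+(\zeta(b)-b-2)=\zeta(b)-1$.

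For the upper bound, let $R$ be ideal with $f_R(b)\le\zeta(b)-1$, put $S=\{x\in[n]:\{x\}\in R\}$, let $E$ be the set of $2$-element members of $R$, and let $\mathcal{T}$ be the set of members of $R$ of size at least $3$. Since $R$ is ideal, $E$ is a graph on vertex set $V(E)\subseteq S$ and every member of $\mathcal{T}$ lies in $V(E)$, so $|R|=1+|S|+|E|+|\mathcal{T}|$. If $\mathcal{T}=\emptyset$ and $|S|\ge b$, then for $X\subseteq S$ with $|X|=b$ we get $R_{|X}=\{\emptyset\}\cup\{\{x\}:x\in X\}\cup E[X]$, so $|E[X]|\le\zeta(b)-b-2$; as the edges of $E$ lie inside $S$, the maximum of $|E[X]|$ over all $b$-sets is attained inside $S$, hence $|E|\le Ex(|S|,b,\zeta(b)-b-2)\le Ex(n,b,\zeta(b)-b-2)$ and $|R|\le n+1+Ex(n,b,\zeta(b)-b-2)$; the sub-case $|S|<b$ is immediate since then $|R|\le1+|S|+\binom{|S|}{2}$.

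The case $\mathcal{T}\neq\emptyset$ is the heart of the proof; I claim it forces $|E|=O_b(1)$, and hence (every member of $\mathcal{T}$ lying in $V(E)$, a set of size $O_b(1)$) also $|\mathcal{T}|=O_b(1)$, so that $|R|=1+|S|+|E|+|\mathcal{T}|\le n+1+O_b(1)$. This suffices, because $Ex(n,b,\zeta(b)-b-2)\ge\lfloor n/2\rfloor$ (a perfect matching is admissible, as $\zeta(b)-b-2\ge\lfloor b/2\rfloor$), so the claimed inequality holds for $n$ large, the remaining small values of $n$ being checked directly (it is an equality at $n=b$, both sides equalling $\zeta(b)-1$). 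To see $|E|=O_b(1)$, fix $A\in\mathcal{T}$ and $T\subseteq A$ with $|T|=3$; since $R$ is ideal, $R_{|X}\supseteq 2^T$ for every $b$-set $X\supseteq T$, leaving only $3\lfloor\frac{b-3}2\rfloor+s(b)-1$ further traces available on such an $X$. Choosing the $b-3$ vertices of $X\setminus T$ so as to absorb, in turn, a maximum matching of $E$ disjoint from $T$, then a maximum-degree vertex of $E$ with its neighbours, and as many further elements of $S$ as fit, the count of $|R_{|X}|$ shows that the matching number and maximum degree of $E$ are bounded in terms of $b$, except possibly when $S$ itself is bounded in terms of $b$ — but then $V(E)\subseteq S$ is already bounded. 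Either way $|E|\le 2\,\nu(E)\,\Delta(E)=O_b(1)$, since every edge of $E$ meets a fixed maximum matching of $E$. The cases $b\le5$ are trivial: for $b=3$ no triple can occur (it would give $f_R(3)\ge8$), and for $b\in\{4,5\}$ the argument collapses $E$ to a bounded subgraph supported near $T$.

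The main obstacle is precisely the bookkeeping in this last case: one must carry out, separately for even and odd $b$, the exact count of $|R_{|X}|$ showing that a single $b$-set containing $T$ already reaches $\zeta(b)-1$ as soon as $E$ carries a nontrivial matching away from $T$, or a high-degree vertex, or $S$ is large together with a few extra edges — this is where the defining formula for $\zeta(b)$ is used. Two routine matters are the treatment of small $n$ and, in the lower bound, the verification that an extremal graph can be chosen to realise the trace value $\zeta(b)-1$ exactly rather than merely to bound it.
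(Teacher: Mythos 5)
Your approach to the lower bound and to the triple-free case coincides with the paper's, but your handling of set systems that do contain a triple is a genuinely different route from the paper's, and it is left with a real gap. The paper works with a \emph{maximal} ideal $([n],R)$ and argues that it cannot contain a $3$-element range at all: writing $b=3+2j+s$, it augments a triple $A$ by $j$ further pairs of $R$ disjoint from $A$ and from each other (plus a singleton if $s=1$), noting that each such pair contributes three new traces, so that the resulting $b$-set $B$ has $|R_{|B}|\ge 8+3j+s=\zeta(b)$. This one observation is the whole reason $\zeta(b)$ has the form it does, and it eliminates the case $\mathcal{T}\neq\emptyset$ at a stroke, after which only the decomposition $R=\{\emptyset\}\cup V\cup E$ remains. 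You instead try to bound $|E|$ and $|\mathcal{T}|$ when a triple is present, via the matching number and maximum degree of $E$, and then fall back on $Ex(n,b,\zeta(b)-b-2)\ge\lfloor n/2\rfloor$ to absorb an $O_b(1)$ surplus for large $n$.

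The gap is exactly where you flag it: the sentence ``the count of $|R_{|X}|$ shows that the matching number and maximum degree of $E$ are bounded in terms of $b$'' is a claim, not an argument, and carrying it out for general $b$ (with the even/odd split, with edges incident to $T$ treated separately from edges disjoint from $T$, and with the degenerate case $|S|<b$) is precisely the content you would need. Moreover, even granting it, your conclusion $|R|\le n+1+O_b(1)$ only wins over $n+1+Ex(n,b,\zeta(b)-b-2)$ for $n$ past some threshold growing with $b$ (your $O_b(1)$ includes $|\mathcal{T}|\le 2^{|V(E)|}$, which could be of order $2^{\Theta(b^2)}$), and ``checking the remaining small $n$ directly'' is then a nontrivial unbounded family of cases that cannot simply be waved at. The paper's maximality trick is designed to avoid all of this. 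That said, you have put your finger on a point the paper glosses over: the paper's augmentation step tacitly assumes that $j$ mutually disjoint pairs of $R$, disjoint from $A$ (and, when $s=1$, an extra singleton) are actually available, which is not automatic and is really only plausible for a maximal $R$ with all of $[n]$ as singletons. Your extra care on the lower bound — choosing an extremal graph that actually realises a $b$-set with $\zeta(b)-b-2$ edges, so that $f_R(b)=\zeta(b)-1$ rather than merely $f_R(b)\le\zeta(b)-1$ — is a small refinement the paper does not spell out, but it is a routine matter; the substantive missing piece is the quantitative analysis of the case $\mathcal{T}\neq\emptyset$.
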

\begin{proof}
 By Lemma~\ref{lem:normalization} it suffices to prove the statement
 for ideal set systems. Let $([n],R)$ be an ideal set system with
 $f_R(k)<\zeta(b)-1$ and maximal size. If $R$ contains some range $A$
 of size $3$, then $|R_{|A}|=8$. Now, write $b=3+2j+s$ with $s\in
 \{0,1\}$. Let $B$ denote the set $A$ augmented by $j$ pairs of
 elements that belong to $R$, and one single element of $R$ if $s =
 1$. The set $B$ has size $b$ and the trace of $R$ on $B$ has size at
 least $8+3j+s = \zeta(b)$. Thus, if $f_R(b) < \zeta(b)$ we get that
 $R$ contains no triple, and can thus be decomposed into
\[ R = \{\emptyset\} \cup V \cup E,\]
 where $V$ are the singletons and $E$ the pairs in $R$; call the
 former the \emph{vertices} of $R$ and the latter its \emph{edges}. If
 some element $x \in [n]$ is not a singleton of $V$ then it is
 contained in no range of $R$, and we can delete it without changing
 the size of $R$; this contradicts the maximality of $R$. Now, notice
 that the trace of $R$ on any $b$ elements contains at most
 $f_R(b)-b-1 = \zeta(b)-b-2$ edges, since it contains the empty set
 and each of the $b$ vertices. Conversely, let $G=([n],E)$ be a graph
 whose induced graph on any $b$ vertices has at most $\zeta(b)-b-2$
 edges. If $R=\{\emptyset\} \cup [n] \cup E$ then the set system
 $([n],R)$ satisfies $f_R(b) <\zeta(b)$ and the statement follows.
\end{proof}
 
 A graph whose induced subgraphs on any $m$ vertices have at most $k <
 \lfloor \frac{m^2}4 \rfloor$ edges cannot contain a $K_{\lfloor
 \frac{k}2 \rfloor,\lfloor \frac{k}2 \rfloor}$, and thus, by the
 K\H{o}v\'ari-S\'os-Tur\'an Theorem, has at most $Ex(n,m,k) =
 O\left(n^{2-\frac1{\lfloor \frac{k}2 \rfloor}}\right)$ edges. It
 follows that:
\[ Ex(n,4,3) = Ex(n,5,5) = O(n\sqrt{n}).\]
 The classical constructions yielding bipartite graphs on $n$ vertices
 with $\Theta(n\sqrt{n})$ edges and no $K_{2,2}$ show that this bound
 is best possible. Since $\zeta(4)=9$, we get that the family of
 growth rates obtained by the conditions $f_R(b)=k$ does not only
 contain polynomial growth with integer exponents:

\begin{cor}\label{cor:48}
 The largest set system $([n],R)$ with $f_R(4)=8$ or $f_R(5)=10$ has
 size $|R| = \Theta(n\sqrt{n})$.
\end{cor}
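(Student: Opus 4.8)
The plan is to recognize both hypotheses as instances of the condition $f_R(b) = \zeta(b)-1$, apply Lemma~\ref{lem:red-to-dirac}, and then sandwich the resulting Dirac-type quantity between the K\H{o}v\'ari-S\'os-Tur\'an bound and a classical bipartite construction. First I would evaluate $\zeta(4) = 8 + 3\lfloor\tfrac12\rfloor + s(4) = 9$ and $\zeta(5) = 8 + 3\lfloor 1\rfloor + s(5) = 11$, so that the hypotheses $f_R(4) = 8$ and $f_R(5) = 10$ become the cases $b = 4$ and $b = 5$ of the condition $f_R(b) = \zeta(b)-1$. Lemma~\ref{lem:red-to-dirac} then gives that the largest such set systems have sizes $Ex(n,4,\zeta(4)-6)+n+1 = Ex(n,4,3)+n+1$ and $Ex(n,5,\zeta(5)-7)+n+1 = Ex(n,5,4)+n+1$; since the additive $n+1$ is of lower order, it suffices to prove $Ex(n,4,3) = \Theta(n\sqrt n)$ and $Ex(n,5,4) = \Theta(n\sqrt n)$.

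For the upper bound I would observe that a graph in which every $4$ vertices span at most $3$ edges contains no $K_{2,2}$ (which has $4$ vertices and $4 > 3$ edges) and that a graph in which every $5$ vertices span at most $4$ edges contains no $K_{2,3}$ (which has $5$ vertices and $6 > 4$ edges); the K\H{o}v\'ari-S\'os-Tur\'an theorem then bounds the number of edges in each case by $O(n^{3/2}) = O(n\sqrt n)$. For the lower bound I would take a classical bipartite $C_4$-free graph $G$ on $n$ vertices with $\Theta(n\sqrt n)$ edges and maximum degree $\Theta(\sqrt n)$ — for instance a point-line incidence graph of a projective plane — and form the ideal set system $R = \{\emptyset\} \cup [n] \cup E(G)$ of Lemma~\ref{lem:red-to-dirac}, which has size $1 + n + |E(G)| = \Theta(n\sqrt n)$.

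It then remains to check that this $R$ actually satisfies $f_R(4) = 8$ and $f_R(5) = 10$. Writing $e_G(X)$ for the number of edges $G$ induces on a vertex set $X$, one has $|R_{|X}| = 1 + |X| + e_G(X)$, so the claim is that $e_G(X) \le b - 1$ for every $b$-set $X$, with equality realized by a star, which is available since $G$ has a vertex of degree $\ge 4$, for $b \in \{4,5\}$. I expect this last verification to be the (mild) crux: splitting $X$ along the bipartition of $G$ and using only that two vertices on the same side have at most one common neighbour, the extremal split is $2$-$2$ for $b = 4$ and $2$-$3$ for $b = 5$, each of which induces at most $b-1$ edges because $G$ is $C_4$-free, while every other split induces a union of stars and hence even fewer. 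Combining the upper and lower bounds yields $|R| = \Theta(n\sqrt n)$ in both cases.
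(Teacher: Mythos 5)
Your argument is correct and follows the paper's own route: reduce to $Ex(n,4,3)$ and $Ex(n,5,4)$ via Lemma~\ref{lem:red-to-dirac}, then sandwich by the K\H{o}v\'ari-S\'os-Tur\'an bound above (no $K_{2,2}$, resp.\ no $K_{2,3}$) and by $C_4$-free bipartite incidence graphs below. You are in fact slightly more careful than the paper, which writes $Ex(n,5,5)$ where the lemma actually yields $Ex(n,5,4)$ and misstates the excluded complete bipartite subgraph as $K_{\lfloor k/2\rfloor,\lfloor k/2\rfloor}$ rather than the balanced $K_{\lfloor b/2\rfloor,\lceil b/2\rceil}$; your final verification that the projective-plane construction attains $f_R(b)=\zeta(b)-1$ exactly is a correct detail that the paper leaves implicit in the lemma.
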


\noindent
 Note that Lemma~\ref{lem:red-to-dirac} can be extended into an
 equivalence of Question~1 and Dirac's problem on $r$-regular
 hypergraphs for arbitrary large $r$.

\section{Families of permutations}

 In this section we give bounds on the size of a family $F$ of
 permutations on $[n]$ with a given $\phi_F(b)$.

\paragraph{\bf Reduction to set systems.} An \emph{inversion} of a
 permutation $\sigma$ on $[n]$ is a pair of elements $i<j$ such that
 $\sigma^{-1}(i)>\sigma^{-1}(j)$. The \emph{distinguishing pair} of
 two permutations $\sigma_1$ and $\sigma_2$ is the lexicographically
 smallest pair $(i,j) \subset [n]$ that appears in different orders in
 $\sigma_1$ and $\sigma_2$, i.e. is an inversion for one but not for the
 other. If $F$ is a family of permutations on $[n]$ we let $I_F$
 denote the set of distinguishing pairs of pairs of permutations from
 $F$. Given a permutation $\sigma \in F$, we let $R(\sigma)$ denote
 the set of elements of $I_F$ that are inversions of $\sigma$, and let
 $R(F) = \{R(\sigma)\mid\sigma \in F\}$. Observe that $(I_F,R(F))$ is a
 range space and that $R$ is a one-to-one map between $F$ and
 $R(F)$. In particular $|F| = |R(F)|$.

\begin{lemma}\label{lem:correspondance}
$f_{R(F)}(\lfloor \frac{m}2\rfloor) \le \phi_F(m) $ and $|I_F| \le Ex(n,m,\phi_F(m)-1)$.
\end{lemma}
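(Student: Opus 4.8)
The plan is to prove the two inequalities separately; in both cases the idea is to pull a restriction of the range space $(I_F, R(F))$ back to a restriction of the permutation family $F$, exploiting that whether a pair $\{u,v\}$ is an inversion of a permutation $\sigma$ is determined by the order $\sigma$ induces on $\{u,v\}$, hence by $\sigma_{|X}$ as soon as $\{u,v\}\subseteq X$.

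\emph{First inequality.} I would fix an arbitrary $W\subseteq I_F$ with $|W|=\lfloor m/2\rfloor$ and let $X\subseteq[n]$ be the union of the endpoints of the edges in $W$, so that $|X|\le 2\lfloor m/2\rfloor\le m$. For $\sigma\in F$ the set $R(\sigma)\cap W=\{e\in W: e\text{ is an inversion of }\sigma\}$ depends only on the order $\sigma$ induces on $X$, that is, only on $\sigma_{|X}$; hence $|R(F)_{|W}|\le|F_{|X}|$. Since $X\subseteq X'$ forces $|F_{|X}|\le|F_{|X'}|$ (restriction is onto), one has $|F_{|X}|\le\phi_F(m)$, and taking the maximum over $W$ gives $f_{R(F)}(\lfloor m/2\rfloor)\le\phi_F(m)$.

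\emph{Second inequality.} Here I would view $I_F$ as the edge set of a graph $G$ on vertex set $[n]$ and show that $G[X]$ has at most $\phi_F(m)-1$ edges for every $X\in\binom{[n]}{m}$; the bound $|I_F|\le Ex(n,m,\phi_F(m)-1)$ then follows from the definition of $Ex$. Fix such an $X$. For $\sigma\in F$ let $r_X(\sigma)\subseteq\binom{X}{2}$ be the set of inversions of $\sigma$ contained in $X$; this set records exactly the order induced on $X$, so the map $\sigma_{|X}\mapsto r_X(\sigma)$ is a well-defined bijection between $F_{|X}$ and $\mathcal{A}:=\{r_X(\sigma):\sigma\in F\}$, and in particular $|\mathcal{A}|=|F_{|X}|$. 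Next, take an edge $e\in I_F$ with $e\subseteq X$: by definition $e$ is the lexicographically smallest pair of $[n]$ that is an inversion of exactly one of two permutations $\sigma_1,\sigma_2\in F$; every pair in $r_X(\sigma_1)\triangle r_X(\sigma_2)$ shares that last property and lies in $\binom{[n]}{2}$, hence cannot be lexicographically smaller than $e$, so $e=\min\bigl(r_X(\sigma_1)\triangle r_X(\sigma_2)\bigr)$ with $r_X(\sigma_1)\ne r_X(\sigma_2)$ in $\mathcal{A}$. Thus the edge set of $G[X]$ is contained in the set of ``first differences'' $\{\min(A\triangle B): A\ne B\in\mathcal{A}\}$.

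It then remains to prove the combinatorial fact I expect to be the crux: for any nonempty family $\mathcal{A}$ of subsets of a finite linearly ordered set $Y$, the number of distinct values $\min(A\triangle B)$ over pairs of distinct $A,B\in\mathcal{A}$ is at most $|\mathcal{A}|-1$. I would argue by induction on $|Y|$: writing $y_0$ for the least element of $Y$ and splitting $\mathcal{A}=\mathcal{A}_0\cup\mathcal{A}_1$ according to whether a set contains $y_0$, every realized value other than $y_0$ comes from a pair on the same side of the split and is therefore realized inside $\mathcal{A}_0$ or inside $\mathcal{A}_1$ viewed over $Y\setminus\{y_0\}$; applying the induction hypothesis to each part and adding $1$ for the possible value $y_0$ closes the count. (Equivalently, these values are the branching depths in the binary trie of the indicator vectors of the sets of $\mathcal{A}$, a tree with $|\mathcal{A}|$ leaves and hence at most $|\mathcal{A}|-1$ branching nodes.) Taking $Y=\binom{X}{2}$ yields $|E(G[X])|\le|\mathcal{A}|-1=|F_{|X}|-1\le\phi_F(m)-1$, which is what we needed. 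Beyond this lemma, the only delicate point is checking that the lexicographic order defining distinguishing pairs over $[n]$ restricts correctly to $\binom{X}{2}$, which is precisely the step carried out above; everything else is bookkeeping.
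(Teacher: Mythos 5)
Your proof is correct and follows the same overall structure as the paper's: pull traces of $(I_F,R(F))$ back to restrictions of $F$ for the first inequality, and for the second, bound the number of edges of $G[X]=([n],I_F)[X]$ by the number of restrictions $|F_{|X}|-1\le\phi_F(m)-1$. The first part is essentially identical to the paper's argument. For the second part, the paper introduces $s(t)$, the maximum number of distinguishing pairs in a family of $t$ permutations, and derives $s(t)\le t-1$ from the recurrence $s(t)\le 1+\max_i\{s(i)+s(t-i)\}$ obtained by splitting the family according to whether the lexicographically smallest distinguishing pair is an inversion; it then asserts tersely that this bounds the edges of $G[X]$. You reach the same count through a slightly more abstract route: you isolate the clean general fact that a family $\mathcal{A}$ of subsets of an ordered ground set realizes at most $|\mathcal{A}|-1$ distinct first differences $\min(A\triangle B)$, and you carefully check that every edge of $G[X]$ arises as $\min\bigl(r_X(\sigma_1)\triangle r_X(\sigma_2)\bigr)$ because the lexicographic minimum over $\binom{[n]}{2}$ is, for a pair already contained in $X$, also the minimum over $\binom{X}{2}$. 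Your induction on $|Y|$ (splitting by membership of the least element) and the paper's induction on $t$ (splitting by the least distinguishing pair) are two phrasings of the same trie argument, so this is a difference of exposition rather than of method; your version has the minor advantage of making explicit the step the paper leaves implicit, namely that distinguishing pairs contained in $X$ really are distinguishing pairs of the restricted family $F_{|X}$.
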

\begin{proof}
 Consider $b=\lfloor \frac{m}2\rfloor$ elements $(p_1, \ldots, p_b)$
 in $I_F$ and assume there exists $k$ ranges $R(\sigma_1), \ldots,
 R(\sigma_{k})$ with distinct traces on $\{p_1, \ldots, p_b\}$. Then
 the restrictions of $\sigma_1, \ldots, \sigma_{k}$ on $X=\cup_{1 \le
 i \le b}p_i$ must also be pairwise distinct. Thus, $\phi_F(m) \ge k$
 whenever $f_{R(F)}(\lfloor \frac{m}2\rfloor) \ge k$, and the
 statement follows.

 Let $s(t)$ denote the maximum number of distinguishing pairs in a
 family of $t$ permutations (on $[n]$). From
\[ s(2) = 1 \quad \hbox{and} \quad s(t) \le 1+\max_{1 \le i \le t-1} \left\{s(i)+s(t-i)\right\},\]
 we get that $s(t) \le t-1$ by a simple induction. This implies that
 in the graph $G=([n],I_F)$, any $m$ vertices span at most
 $\phi_F(m)-1$ edges, and it follows that
\[|I_F| \le Ex(n,m,\phi_F(m)-1),\]
 which concludes the proof.
\end{proof}

 A subquadratic $I_F$ is not always possible: every pair is a distinguishing pair of the family of all permutations on $[n]$ that restrict to
 the identity on some $(n-1)$-tuple. For that family, $\phi_F(m) =
 (m-1)^2+1$.

\paragraph{\bf Main transitions.} We can now outline the main transitions
 in the growth rate of families of permutations according to the value
 of $\phi_F(m)$. Let $b = \lfloor \frac{m}2 \rfloor$.
\begin{itemize}
\item If $\phi_F(m) \le \lfloor \frac{m}2 \rfloor$ then, by
 Lemma~\ref{lem:correspondance}, $f_{R(F)}(b) \le b$ and
 Theorem~\ref{thm:ranges-upper} with $i=0$ yields that $|F| = |R(F)| =
 O(1)$.

\item Assume that $\lfloor \frac{m}2 \rfloor < \phi_F(m) < 2\lfloor
 \frac{m}2 \rfloor$. Then, by Lemma~\ref{lem:correspondance},
 $f_{R(F)}(b) < 2b$ and Theorem~\ref{thm:ranges-upper} with $i=1$
 yields that $|F| = |R(F)| = O(|I_F|) = O(Ex(n,m,m-2)) =O(n)$. A
 matching lower bound is given by the family
\begin{quote}
 $F_1:$ all permutations on $[n]$ that differ from the identity by
 the transposition of a single pair of the form $(2i,2i+1)$,
\end{quote}
 of size $1+\lfloor \frac{n}2 \rfloor$ and with $\phi_{F_1}(m) = \lfloor \frac{m}2 \rfloor+1$.

\item If $\phi_F(m) < 2^{\lfloor \frac{m}2 \rfloor}$ then, by
 Lemma~\ref{lem:correspondance}, $f_{R(F)}(b) < 2^b$ and $(I_F,R(F))$
 has VC-dimension at most $b-1$. It follows, from Sauer's Lemma, that $|F| = |R(F)| =
 O(|I_F|^{b-1})$, and since $|I_F| = O(n^2)$, we get that $|F|$ is
 $O\left(n^{2\lfloor \frac{m}2 \rfloor-2}\right)$.

\item If $\phi_F(m) \ge 2^{\lfloor \frac{m}2 \rfloor}$ then the
 family
\begin{quote}
 $F_2:$ all permutations on $[n]$ that differ from the identity by
 the transposition of any number of pairs of the form $\{2i, 2i+1\}$,
\end{quote}
 of size $2^{\lfloor \frac{n}2 \rfloor}$ and with $\phi_{F_2}(m) =
 2^{\lfloor \frac{m}2 \rfloor}$ shows that $|F|$ can be exponential in
 $n$.
\end{itemize}

 If $\phi_F(m)=m$ then $|I_F| = O(Ex(n,m,m-1))$, which is superlinear
 and $O(n^{1+\frac1{\lfloor \frac{m}2 \rfloor}})$~\cite{gst-ext98}. We
 have not found any example showing that $F$ could have superlinear
 size. The main transitions are summarized in
 Table~\ref{tab:transitionsP}.

\begin{table}[!htb]
\hspace{-1cm}
\begin{tabular}[5]{|c||c|c|c|c|}
\hline
 & $\phi_F(m) \le \lfloor \frac{m}2\rfloor$ & $\lfloor \frac{m}2 \rfloor < \phi_F(m) < 2\lfloor \frac{m}2 \rfloor$ & $2\lfloor \frac{m}2 \rfloor \le \phi_F(m) < 2^{\lfloor \frac{m}2 \rfloor}$ & $2^{\lfloor \frac{m}2 \rfloor} \le \phi_F(m)$ \\
\hline
\hline
$|F|$ & $\Theta(1)$ & $\Theta(n)$ & $\Omega(n)$ and $O\left(n^{2\lfloor \frac{m}2 -2 \rfloor}\right)$ & $\Omega(2^{\lfloor \frac{n}2 \rfloor})$\\
\hline
\end{tabular}
\caption{Maximum size of a family $F$ of permutations as a function of $\phi_F(m)$.\label{tab:transitionsP}}
\end{table}

\paragraph{\bf Exponential upper bounds.} 
 Raz~\cite{Raz00vc-dimensionof} proved that if $\phi_F(3) \le 5$ then
 $|F|$ has size at most exponential in $n$. The following simple
 observation derives a similar bounds for a few other values of
 $\phi_F(m)$.

\begin{lemma}
 If $|F|$ is at most exponential whenever $\phi_F(m-1) \le k-1$ then $|F|$
 is at most exponential whenever $\phi_F(m) \le k$.
\end{lemma}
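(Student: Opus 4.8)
The plan is to prove the bound by induction on $n$, reducing to parameters $(m-1,k-1)$ whenever possible and otherwise exploiting a product decomposition. Write $g(t)$ for the largest size of a family of permutations of $[t]$ whose shatter function satisfies $\phi(m)\le k$; the goal is $g(n)=O(c^n)$ for a constant $c=c(m,k)$. Let $F$ be a family on $[n]$ with $\phi_F(m)\le k$. If $\phi_F(m-1)\le k-1$, then the hypothesis applies to $F$ itself and bounds $|F|$ by an exponential, so the only case left is $\phi_F(m-1)=k$, which together with $\phi_F(m-1)\le\phi_F(m)\le k$ also forces $\phi_F(m)=k$.

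So assume $\phi_F(m-1)=\phi_F(m)=k$, and fix an $(m-1)$-set $X'\subseteq[n]$ with $|F_{|X'}|=k$. The key point is a rigidity statement: for every $y\notin X'$, the rank of $y$ relative to $X'$ in a permutation $\sigma$ depends only on $\sigma_{|X'}$. Indeed the restriction map $F_{|X'\cup\{y\}}\to F_{|X'}$ is onto, its domain has size at most $\phi_F(m)=k=|F_{|X'}|$, hence it is a bijection, and so every ordering of $X'$ realized by $F$ has exactly one extension to $X'\cup\{y\}$ inside $F$. Now partition $F$ into the $k$ classes $F^{\rho}=\{\sigma\in F:\ \sigma_{|X'}=\rho\}$. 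Within $F^{\rho}$, each element of $[n]\setminus X'$ lies in a gap of $X'$ that is fixed by $\rho$, and two elements in different gaps occur in an order that is also fixed by $\rho$; hence a permutation of $F^{\rho}$ is determined by its restrictions to the $m$ gap-sets $B^{\rho}_1,\dots,B^{\rho}_m$. Since each $F_{|B^{\rho}_i}$ is itself a family whose shatter function at level $m$ is at most $k$, we get $|F^{\rho}|\le\prod_{i=1}^m g(b^{\rho}_i)$ with $b^{\rho}_i=|B^{\rho}_i|$ and $\sum_i b^{\rho}_i=n-m+1$. Summing over the $k$ classes,
\[
 |F|\ \le\ k\cdot\max_{\,b_1+\cdots+b_m\,=\,n-m+1}\ \prod_{i=1}^{m} g(b_i).
\]

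Combining the two cases, $g(n)\le\max\!\bigl(c_0^{\,n},\ k\cdot\max_{\sum b_i=n-m+1}\prod_{i=1}^{m}g(b_i)\bigr)$, where $c_0$ is the constant supplied by the hypothesis; since $m\ge 2$ each $b_i<n$, so this is a genuine recursion, and a routine induction gives $g(n)\le Cc^n$ once $c$ is chosen large enough (in terms of $m$, $k$, $c_0$, and the finitely many small values of $g$) and $C$ is chosen so that $kC^{m-1}\le c^{m-1}$, because then $kC^m c^{\,n-m+1}\le Cc^n$.

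The part that carries the argument, and where I would concentrate the care, is recognizing that the only case not covered directly by the hypothesis is $\phi_F(m-1)=\phi_F(m)=k$, and that there the saturation of the trace on an $(m-1)$-set forces the rank of every other element relative to that set to be determined — this rigidity is what produces the product structure over the $m$ gaps. Everything afterwards (the choice of constants in the induction, and the mild nuisance that gap-sets of size below $m$ make $g$ degenerate to a factorial, which is absorbed into $c$) is bookkeeping; this is consistent with the paper billing the statement as a simple observation.
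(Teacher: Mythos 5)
Your proof is correct and rests on the same key observation as the paper's: when $\phi_F(m-1)=\phi_F(m)=k$ and $|F_{|X'}|=k$ for some $(m-1)$-set $X'$, the restriction map $F_{|X'\cup\{y\}}\to F_{|X'}$ is a surjection between sets of size at most $k$ and exactly $k$, hence a bijection, so the rank of every $y\notin X'$ relative to $X'$ is determined by $\sigma_{|X'}$. Where you diverge is in how this rigidity is exploited. The paper partitions $F$ into the $k$ classes $F_i$ by their trace on $X$, observes that rigidity makes the projection $F_i\to (F_i)_{|[n]\setminus X}=:G_i$ injective, keeps each $G_i$ as a single family on the $n-m+1$ remaining elements, and then branches on whether $\phi_{G_i}(m-1)\le k-1$ (invoke the hypothesis) or $=k$ (recurse), giving $T(n,m,k)\le k\max\bigl(T(n-m+1,m-1,k-1),\,T(n-m+1,m,k)\bigr)$. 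You instead push rigidity one step further: inside each class $F^\rho$ the gap of every external element and the relative order of elements in distinct gaps are frozen, so $\sigma\in F^\rho$ is reconstructed from its $m$ gap restrictions, giving the product bound $|F^\rho|\le\prod_j g(b_j)$ with $\sum_j b_j=n-m+1$, and you invoke the hypothesis only once, at the top level, to dispose of the branch $\phi_F(m-1)\le k-1$. Both recursions close by a routine induction on $n$; your gap factorization is a slightly sharper structural statement than simply deleting $X$, but since only a qualitative exponential bound is sought the extra precision is not cashed in. One small thing worth stating explicitly: each $(F^\rho)_{|B^\rho_j}$ inherits $\phi(m)\le k$ from $F$ because restricting a family can only shrink its traces, which is what justifies bounding each factor by $g(b^\rho_j)$.
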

\begin{proof}
 Let $T(n,m,k)$ denote the maximum size of a family $F$ such that
 $\phi_F(m) \le k$. Assume that $\phi_F(m) = k = \phi_F(m-1)$ as
 otherwise the statement trivially holds. Let $X \in \binom{[n]}{m-1}$
 such that $F_{|X} = \{\sigma_1, \ldots, \sigma_k\}$ has size $k$, and
 let:
\[ F_i = \{ \sigma \in F \mid \sigma_{|X} = \sigma_i\}.\]
 Observe that $F$ is the disjoint union of the $F_i$. Since
 $\phi_F(m)=k$, for any $e \in [n]\setminus X$ and for any $i=1,
 \ldots, k$, there exists a unique permutation in $(F_i)|_{X \cup
 \{e\}}$ that restricts to $\sigma_i$ on $X$. In other words, for
 every element in $[n]\setminus X$, the set $X \cup \{e\}$ appears in
 the same order in all permutations of $F_i$. It follows that
\[ |F_i| = |(F_i)_{|[n]\setminus X}|,\]
 that is, deleting $X$ does not decrease the size of each $F_i$
 considered individually -- although it may decrease the size of
 $F$. Now, let $G_i = (F_i)_{|[n]\setminus X}$ and consider the set
 system $([n]\setminus X,G_i)$. If $\phi_{G_i}(m-1)\le k-1$ then
 $|G_i| \le T(n-m+1,m-1,k-1)$, and otherwise $\phi_{G_i}(m-1) =k$ and
 we recurse. Altogether, we have the recursion
\[ T(n,m,k) \le k \max\left(T(n-m+1,m-1,k-1),T(n-m+1,m,k) \right),\]
 and it follows that if $T(n,m-1,k-1)$ is at most exponential, so is
 $T(n,m,k)$.
\end{proof}

 It then follows, with Raz's result, that $|F|$ is at most exponential
 whenever $\phi_F(m) \le m+2$.  Table~\ref{tab:smallvalsP} tabulates
 our results for small values of $m$ and $\phi_F(m)$, using the
 currently best known bounds on $Ex(n,k,\mu)$ we are aware
 of~\cite{gst-ext98}.

\begin{table}[!htb]
\hspace{-1.7cm}
\begin{tabular}[7]{|c||c|c|c|c|c|c|c|c|c|}
\hline
 & $k=2$ & $k=3$ & $k=4$ & $k=5$ & $k=6$ & $k=7$ & $k=8$ & $k=9$ & $k=10$\\
\hline
\hline
$m=2$ & $n!$ & - & - & - & - & - & - & - & -\\
\hline
$m=3$ & $2^{\Theta(n)}$ & $2^{\Theta(n)}$ & $2^{\Theta(n)}$ & $2^{\Theta(n)}$ & $n!$ & - & - &- &-\\
\hline
$m=4$ & $2$ & $\lfloor\frac{2n}3\rfloor+1$ & $2^{\Theta(n)}$ & $2^{\Theta(n)}$ & $2^{\Theta(n)}$ & $2^{\Omega(n)}$  & $2^{\Omega(n)}$ & $2^{\Omega(n)}$  & $2^{\Omega(n)}$\\
\hline
$m=5$ & $2$ & $\lfloor\frac{n}2\rfloor+1$ & $2^{\Theta(n)}$ & $2^{\Theta(n)}$ & $2^{\Theta(n)}$ & $2^{\Theta(n)}$  & $2^{\Omega(n)}$ & $2^{\Omega(n)}$  & $2^{\Omega(n)}$\\
\hline
$m=6$ & $2$ & $3$ & $\Theta(n)$ & $\Theta(n)$ & $O(n^3)$ & $O(n^3)$ & $2^{\Theta(n)}$ & $2^{\Omega(n)}$ & $2^{\Omega(n)}$\\
\hline
$m=7$ & $2$ & $3$ & $\Theta(n)$ & $\Theta(n)$ & $O(n^2)$ & $O(n^3)$ & $2^{\Theta(n)}$ & $2^{\Theta(n)}$ & $2^{\Omega(n)}$  \\
\hline
$m=8$ & $2$ & $3$ & $4$ & $\Theta(n)$ & $\Theta(n)$ & $\Theta(n)$ & $O(n^{21/8})$ & $O(n^{7/2})$ & $O(n^{7/2})$\\
\hline
$m=9$ & $2$ & $3$ & $4$ & $\Theta(n)$ & $\Theta(n)$ & $\Theta(n)$ & $O(n^2)$ & $O(n^{7/2})$ & $O(n^{7/2})$\\
\hline
$m=10$ & $2$ & $3$ & $4$ & $5$ & $\Theta(n)$ & $\Theta(n)$ & $\Theta(n)$ & $\Theta(n)$ & $O(n^{27/10})$\\
\hline
\end{tabular}
\caption{Maximum size of a family $F$ of permutations on $[n]$ with $\phi_F(m)=k$.\label{tab:smallvalsP}}
\end{table}

\section{Conclusion}

 A natural open question is the tightening of the bounds for both
 Questions~$1$ and~$2$. In particular, the first case where
 Lemma~\ref{lem:correspondance} no longer guarantees that the
 reduction from permutations to set systems leads to a ground set with
 linear size is $\phi_F(m)=m$; does that condition still imply that
 $|F|$ is $O(n)$ when $m$ is large enough?

\medskip

 Raz's generalization of the Stanley-Wilf conjecture, that is, whether
 $\phi_F(m)<m!$ implies that $|F|$ is exponential in $n$, also appears
 to be a challenging question. Can it be tackled by a
 ``normalization'' technique similar to Lemma~\ref{lem:normalization}?

\medskip

 A line intersecting a collection $C$ of pairwise disjoint convex sets
 in $\R^d$ induces two permutations, one reverse of the other,
 corresponding to the order in which each orientation of the line
 meets the set. The pair of these permutations is called a
 \emph{geometric permutation} of $C$. One of the main open questions
 in geometric transversal theory~\cite{w-httgt-04} is to bound the
 maximum number of geometric permutations of a collection of $n$
 pairwise disjoint sets in $\R^d$ (see for
 instance~\cite{as-gpiltfp-05,ak-mngpdtcs-06,cgn-gpdus-05,ss-ngp-03}). We
 can pick from each geometric permutation one of its elements so that
 the resulting family $F$ has the following property: if any $m$
 members of $C$ have at most $k$ distinct geometric permutations then
 $\phi_F(m-2) \le k$. One interesting question is whether bounds such
 as the one we obtained could lead to new results on the geometric
 permutation problem.

\section*{Acknowledgments}

The authors thank Olivier Devillers and Csaba T\'oth for their helpful comments.

\bibliographystyle{plain}
\bibliography{trans}
\end{document}